\theoremstyle{plain}
\newtheorem{theorem}{Theorem}[section]
\theoremstyle{definition}
\theoremstyle{remark}
\numberwithin{equation}{section}
\numberwithin{figure}{section}
\begin{document}
%====================================================== 

%\title{Central extensions of  abelian current algebras}

\title{Extensions of lattice groups, gerbes and chiral fermions on a torus}
\footnote{Invited talk at the
conference "String Geometries and Dualities", IMPA, Rio de Janeiro, December 12 - 16, 2016}

\author{Jouko Mickelsson}
\address{Department of Mathematics and Statistics, University of Helsinki}

\email{jouko@kth.se}

\maketitle
\begin{abstract} 
Motivated by the topological classification of hamiltonians in condensed matter
physics (topological insulators) we study the relations between chiral
Dirac operators coupled to an abelian vector potential on a torus in 3 and 1 space dimensions.
We find that a large class of these hamiltonians in three dimensions is equivalent, in K theory, to a family of hamiltonians in just one space dimension but with a different
abelian gauge group.

The moduli space of U(1) gauge connections over a torus with a fixed Chern class
is again a torus up to a homotopy. Gerbes over a n-torus can be realized in terms of extensions of
the lattice group acting in a real vector space. The extension comes from
the action of the lattice group (thought of as "large" gauge transformations,
homomorphisms from the torus to U(1)) in the Fock space of chiral fermions.
Interestingly, the K theoretic classification of Dirac operators coupled to vector
potentials in this setting in 3 dimensions can be related to families of Dirac operators on a circle with gauge group the 3-torus. 

\end{abstract}

\section{Introduction}

Topological classification of hamiltonians in condensed matter physics has 
attracted considerable interest in recent years. Families of hamilton operators
para\-metrized by some topological space $X$ lead to a study of the K group
$K(X)$ of $X.$ Typically the physical space is compactified as a torus $T^d$ 
and the hamiltonians are some simple perturbations of the (chiral) Dirac operator
on $T^d.$ Often there are discrete symmetries (time reversal, charge conjugation,
parity) which make the classification richer, in terms of the Real K theory
groups $KR(X),$ giving the 'periodic table' for topological insulators \cite{Kit},
\cite{RSFL}. 

In this paper we shall just restrict us to the complex K theory groups $K(X).$
The Dirac operators are coupled to an abelian gauge potential (Maxwell field or
its generalization).  The real interest is in the case of space dimension $d=3$
but interestingly we can show that a large class of chiral hamiltonians is actually
equivalent, in terms of K theory, to a family of hamiltonians on the circle $S^1$
with another abelian gauge group.

The canonical quantization of chiral hamilton operators leads to projective
Hilbert bundle (a gerbe) over the moduli space of gauge connections. 
In the nonabelian case these have studied earlier in \cite{CMM}.
In the abelian case there are important simplifications. It turns out that
the gerbe can be constructed from a central extension of a transformation
groupoid, the lattice group $\mathbb{Z}^3$ acting on the real vector space 
$\mathbb{R}^3.$ 

A gerbe can be alternatively described in terms of local complex line bundles
over intersections of open sets in a cover.  In the case at hand, pulling back
to $\mathbb{R}^3$ by the projection $\pi: \mathbb{R}^3 \to T^3$ the gerbe can be
trivialized with respect to the pull-back of the open cover; this leads to
a family of local complex line bundles over $\mathbb{R}^3$ with singularities at
lattice points parametrized by the $\mathbb{Z}^3$ action. Interestingly, 
these line bundles are renormalized infinite sums of monopole line bundles at the lattice points, the total curvature coming from the curvature of a certain
Grassmann manifold modelled by Schatten ideals. This construction can be viewed
as a special case of the abstract gauge group extension in \cite{MR}.

Finally, in Section 5  an alternative method for constructing gerbes over compact
Lie groups is presented in terms of finite-dimensional groupoid extensions
using a similar idea as in the construction of the central extension of the
transformation groupoid $\mathbb{R}^3 \times \mathbb{Z}^3 \to \mathbb{R}^3$ 
through canonical quantization of fermions on a torus.

In the hamiltonian quantization of chiral fermions coupled to a nonabelian
vector potential the group of gauge transformations acts through an 
abelian extension.  In particular, in the case of 3-dimensional physical
space $M$ the extension manifests as a 2-cocycle of the form \cite{Mi},\cite{FSh}
\begin{equation}
c(X,Y) = \frac{i}{24\pi^2} \int_M \text{tr}\, A (dXdY -dYdX)\label{nonab2cocycle}
\end{equation}
for a pair $X,Y$ of infinitesimal gauge transformations acting on Weyl spinors and a gauge potential
$A,$ the trace being computed in a finite-dimensional representation of a
compact gauge group $G.$ The cocycle condition here reads
$$ c([X,Y], Z) + c([Y,Z], X) + c([Z,X], Y) - L_X c(Y,Z) - L_Y c(Z,X) - L_Z c(X,Y)=0$$
with $L_X$ the Lie derivative acting on functions of the potential $A.$
Ideally, the group $\mathcal{G}$ of smooth maps
$M\to G$ would be represented in a Hilbert space arising from the quantization
of fermions and the gauge connection. However, unlike in the case of a $1+1$
dimensional space time such a (faithful) representation is not known. The
obstruction comes from the fact that no suitable measure is known to exist
in the space $\mathcal{A}$ of vector potentials in three space dimensions.

In this paper we concentrate on the case of an abelian gauge field with
gauge group $G=U(1).$ We assume that the 3-dimensional space $M$ is compactified
as the 3-torus $T^3.$ In this case the Lie algebra cocycle defining the extension 
becomes trivial, but there remains a global 2-cocycle supported in the group
of large gauge transformations $\mathbb{Z}^3=\mathrm{Hom}(T^3, U(1)).$ It turns out that 
using the subgroup $\mathcal{G}_0$ of contractible gauge transformations the
remaining degrees of freedom can be factorized as $\mathcal{A}'' \oplus \mathbb{R}^3$
where also the first summand is contractible and the large gauge transformations
$\mathbb{Z}^3$ act as translations on the space $\mathbb{R}^3$ of constant
gauge connections on the torus. 

The quantization of chiral fermions creates a 2-cocycle for the action of the group $\mathbb{Z}^3.$ 
on $\mathbb{R}^3.$ Since the residual gauge group action on the infinite dimensional
part $\mathcal{A}$ is trivial we can concentrate on the action on the fermionic
Fock spaces parametrized by the constant potentials $\mathbb{R}^3.$ Using the 
Lebesgue measure on $\mathbb{R}^3$ we can define the Hilbert space $\mathcal{H}$ 
as the space of square integrable functions on $\mathbb{R}^3$ with values in
the fermionic Fock space $\mathcal{F}.$ The extension of the gauge group
$\mathbb{Z}^3$ is then unitarily and faithfully represented in $\mathcal{H}.$

This paper was inspired by several discussions on condensed matter
problems and Schwinger terms with Edwin Langmann, which is gratefully
acknowledged.

\section{Topological classification of the Dirac operators on a 3-torus}

Let us specify the setting for 1-particle fermions.
We fix a compact 3-dimensional spin manifold $M,$ later to be fixed the 3-torus $T^3.$
Normally, the abelian extension of a gauge current algebra comes from coupling
a nonabelian vector potential $A$ (1-form on $M$ with values in the Lie algebra
$\mathfrak g$ of a compact gauge group $G$) to 2-component Weyl fermions. The canonical
quantization of the Weyl fermions induces the 2-cocycle \eqref{nonab2cocycle} on
the Lie algebra $Map(M,\mathfrak{g})$ when acting in the fermionic Fock spaces
coupled to $A.$

Let us now change the setting in the following way. Consider Dirac fermions (4 components)
coupled to the nonabelian vector potential $A$ in the usual way, and in addition let
us couple the fermions chirally to an abelian vector potential $a;$ that is, $a$ is only
coupled to the left handed fermions but not to the right handed fermions.  

The chiral anomaly is  computed using the families index theorem \cite{AS}. In four space-time
dimensions one takes the 6-form part of the index form
$$ \text{tr}\, \hat{A} e^{F/2\pi i}$$
where $\hat A$ is the A-roof genus computed from the Riemann tensor and $F$ is the curvature
form $F= dA + \frac12 [A,A].$ In most cases in 4 dimensions $\hat A$ vanishes 
(and in particular on a torus or a sphere) so let us
assume that this is the case. The 6-form part is then 
$$\frac{1}{6\cdot (2\pi i)^3} \text{tr}\, F^3.$$
The chiral anomaly and the Schwinger terms are calculated by transgression starting from
the above expression.  However, in the case of Dirac fermions the contributions from
the left and right sectors come with opposite signs and they cancel.  But now  $a$ is
chirally coupled to the fermions there is a piece which is left over, namely
$$\frac{i}{24\pi^2}  \text{tr} \, (f^3 + 3 f F^2).$$
Here $f =da$ is the field strength of the abelian
(Maxwell) potential.  Again by transgression this leads to a mixed Schwinger term
\begin{equation}
c(X,Y) =  \frac{i}{24\pi^2} \int_M  f\, \text{tr} (XdY - YdX).\label{abextension}
\end{equation}
Here $f$ is fixed as the field strength of the external Maxwell field.  Thus we get a central
extension of the current algebra $Map(M, \mathfrak{g}).$   

The above central extension has an operator theoretic derivation similar to the
construction of the current algebra in $1+1$ dimensions in the fermionic Fock space.
Now the grading operator is the sign $\epsilon_a$ of the hamiltonian $D_a = D_0 + \frac12(\gamma_5 +1) a$ where $D_0$ is the free Dirac hamiltonian. 

Then one can check that $[\epsilon_a, X]$ is conditionally Hilbert-Schmidt.
Conditionally means here that when computing traces of operators one has to take first
the trace over spin and gauge algebra indices and then perform space and momentum integration
for pseudodifferential operators.   The trace over spin indices for Dirac fermions makes that 
the diverging contributions from the left and right sector cancel. They cancel totally if
$a=0$ but in the case of $a\neq 0$ there is a left-over piece when expanding $\epsilon_a$ 
in powers of $a.$ This calculation can be done explicitly using residue calculus
\cite{Mick93}, \cite{ArnMick}. 
In the canonical quantization of fermions in $1+1$ dimensions the gauge algebra
is centrally extended and the 2-cocycle of the extension can be evaluated
(when the physical space is compactified as a circle $S^1$)
from 
\begin{equation}
\text{tr}_C  \, X[\epsilon, Y] = \mathrm{Res}\, \epsilon X[\ell, Y] = \frac{1}{2\pi i} \int_{S^1} \text{tr}\,X dY \label{1dcocycle} 
\end{equation}
for a pair $X,Y: S^1 \to \mathfrak{g}$ of infinitesimal gauge transformations where $\ell$ is a logarithmic symbol (the log of the momentum operator in one dimension)
and $\mathrm{Res}$ is the Wodzicki operator residue. Here the conditional trace $\text{tr}_C$ is 
calculated in a bases where $\epsilon$ is diagonal.
In general, the first equality is true only up to coboundaries of cocycles but it can
always be used when the Hilbert-Schmidt condition is satisfied for the off-diagonal 
blocks. The equality comes from 
$$\text{tr}_C \,X[\epsilon, Y] = - \text{TR}^Q \,[X,Y]\epsilon + \text{TR}^Q \, 
[X \epsilon,  Y]. $$
Here $\text{TR}^Q$ denotes a weighted trace in the sense of \cite{Pa}, now the
weight $Q$ is given by the Dirac operator. The first term on the right is a coboundary of a 1-cochain. The second term is a
residue by the general rule for pseudo  differential operators,

$$\text{TR}^Q \, [A, B] = - \mathrm{Res} \, A [\ell, B],$$
for $\ell = \log\, Q,$ see e.g. eq (4) in \cite{Pa}. The second equality in \eqref{1dcocycle} is then a simple consequence of asymptotic calculus of pseudo
differential symbols. 

Now let us concentrate on the abelian case with gauge group $U(1)$, $A=0$ and $X,Y: M \to i\mathbb{R}$ infinitesimal
gauge transformations acting on the abelian vector potential $a.$  We also assume that the physical space $M$ is the 3-torus $T^3.$
At  first sight the extension of the gauge algebra defined by
the cocycle $c$ is trivial since $c= \delta b$ where $b$ is the cochain
$$b(a;X) = \frac{-i}{24\pi^2} \int_{T^3}  da \wedge a X.$$
However, this is not the whole story. First,  the group $\mathcal{G}$ of gauge transformations 
$g: T^3 \to U(1)$ is disconnected,
$$\mathcal{G} = \mathbb{Z}^3 \times \mathcal{G}_0$$
where the elements of $\mathbb{Z}^3$ corresponds to the large gauge transformations
$g(x_1, x_2, x_3) = \exp{(2\pi i \bold{n} \cdot \bold{x} )}$ with $\bold{n} \in 
\mathbb{Z}^3$ and $\mathcal{G}_0$ consists of the contractible maps $T^3 \to U(1)$
which can be written as $g= \exp{(2\pi X)}$ with $X: T^3 \to i\mathbb{R}$ periodic
on the interval $[0,1]^3.$

We can define a group cocycle on $\mathcal G$ such that the corresponding  Lie algebra cocycle is $c.$ It is given by the formula
\begin{equation} C(a; g, g') = e^{2\pi i \int_{T^3}  a\wedge dX \wedge dY}\label{groupextension}
\end{equation}
with $g' = \exp{2\pi Y}.$
In particular, when restricted to $\mathbb{Z}^3$ the cocycle takes the form
\begin{equation}
C(a; \bold{n}, \bold{m}) = e^{2\pi i \bold{a} \wedge \bold{n}\wedge\bold{m}}
\label{groupextensionZ} \end{equation}
for constant potentials $\bold{a}= a_1 dx_1 + a_2 dx_2 + a_3 dx_3.$ This gives all 
topological information about the cocycle since $\mathcal{A}/\mathcal{G}_0$ is
homotopic to the space of constant vector potentials modulo the group $\pi_1(T^3)
=\mathbb{Z}^3$ of large gauge transformations, i.e., the homomorphisms from 
$T^3$ to $S^1.$ This follows from the fact that any potential $a$ on the torus
$T^3$ is gauge equivalent to a potential $A'$ with $\sum \partial_i A'_i =0$,
through a contractible gauge transformation.  The space of divergence free
potentials is a direct sum of the space of constant potentials and its
orthogonal complement, the space $\mathcal{A}''$ of those divergence free potentials $A$ 
with 
$$\int_{T^3} a_i d^3 x =0$$
for $i=1,2,3.$ The large gauge transformations act on $\mathcal{A}'' \oplus 
\mathbb{R}^3$ as translations by $\mathbb{Z}^3$ on the second summand.

Let $\mathcal{G}_b$
be the group of based gauge transformations $g$, i.e., $g(0,0,0) =1.$ This group acts
freely on $\mathcal A$ and from the above discussion it follows that
up to homotopy the moduli space
$\mathcal{A}/\mathcal{G}_b$ can be identified as the set of constant potentials, parametrized
by $\mathbb{R}^3,$ modulo the action of $\mathbb{Z}^3$, i.e., $\mathcal{A}/\mathcal{G}_b
= T^3_a.$ The moduli space torus is denoted by $T^3_a$ whereas $T^3_x$ will denote
the physical space. 

Coming back to the case of Dirac operators on the torus $T^3$ coupled to abelian
gauge potentials $a$ we again restrict to the constant potentials since the
moduli space is $\mathcal{A}/\mathcal{G}_b \simeq \mathbb{R}^3/\mathbb{Z}^3 = T^3_a.$
When $a$ is variable, the potential $a=\sum_i a_i dx_i$ is a potential on
$\mathbb{R}^3_a \times T^3_x;$ it is only locally defined on $T^3_a\times T^3_x$ but
its curvature $F= \sum_i da_i \wedge dx_i$ descends to $T^3_a\times T^3_x.$ 
With the families index formula this gives the 3-form
\begin{equation}
\Omega = \frac{1}{(2\pi i)^3} \int_{T^3_x} F^3 = da_1\wedge da_2\wedge da_3\label{DDclass} \end{equation}
on the moduli torus $T^3_a.$ 

We have assumed that the vector potential $a$ is globally defined, i.e., it
comes from a complex line bundle over $T^3_x$ with vanishing Chern class. 
In the case of non vanishing Chern class we can write a connection in the
form $\nabla + a$ where $\nabla$ is a fixed connection and $a$ is globally defined.
One can then repeat the above considerations in this case. The cohomology $\mathrm{H}^2(T^3,
\mathbb{Z})$ is isomorphic to $\mathbb{Z}^3$ so the total moduli space for all
complex line bundles over $T^3_x$ becomes $\mathbb{Z}^3 \times T^3_a.$ 

The form $\Omega$ is the Dixmier-Douady class of a complex projective vector
bundle. In the canonical quantization of fermions chirally coupled to a vector
potential the bundle of Fock spaces is defined over the covering $\mathbb{R}^3_a$
but there is an obstruction coming from a nonzero $\Omega$ to push it to a 
bundle over the moduli space $T^3_a.$ Alternatively, the obstruction is described
by the central extension of the $\mathbb{Z}^3$ action on $\mathbb{R}^3_a$ 
given by \eqref{groupextensionZ}. 

Let $S_{ijk}$ with $i,j,k=1,2,3$ be any tensor with integer entries. Then
\begin{equation}
C(a;n,m) = e^{2\pi i \sum S_{jkl} a_j n_k m_l} \label{groupextension2}\end{equation}
defines an extension for the $\mathbb{Z}^3$ action on $\mathbb{R}^3.$  
However, the Dixmier-Douady class corresponding to this extension is
$$\sum S_{ijk} da_i \wedge da_j \wedge da_k$$
and therefore depends only on the antisymmetrization $S$ which is $p \cdot \epsilon_
{ijk}$ where $p\in \mathbb{Z}$ and $\epsilon$ is the unique totally antisymmetric
tensor with $\epsilon_{123} = 1,$   \cite{MW}, Section 7. The integer $p$ is equal to
$\frac{1}{6} \sum \epsilon_{ijk} S_{ijk}.$ 

The Dixmier-Douady class is the only topological invariant of a projective
complex Hilbert bundle.  However, the families index theorem gives 
characteristic classes in any odd dimension for a family of self adjoint
Fredholm operators. In our case the parameter space is $T^3$ so the odd cohomology
is nonvanishing only in dimensions 1 and 3. The element in $\mathrm{H}^1(T^3, \mathbb{Z})$
describes the spectral flow for a family of hamiltonians.  It is again computed from the index theorem by taking the form $F^2$ on $T^3_x \times T^3_a$ and integrating
over $T^3_x.$ For a trivial complex line bundle over $T^3_x$ this gives zero
since the (3,1) component of $F^2$ vanishes. However, twisting with a non trivial 
line bundle with curvature $f= f_1 dx_2\wedge dx_3 + f_2 dx_3\wedge dx_1 + f_3 dx_1\wedge dx_2$ the (3,1) component becomes $f\wedge \sum_i da_i \wedge
dx_i$ and its integral over $T^3_x$ is equal to the 1-form $\sum_i f_i da_i$
on $T^3_a.$  

Next we show that up to homotopy the same class of projective Dirac operators
over the parameter space $T^3_a$ is obtained 1D Dirac operators on a unit circle
$S^1_{\theta}.$   One can apply the construction in \cite{HM}, Section 2. Let us write $T^3$
as the product $S^1_{\phi} \times M$ with $M=T^2.$ Fix a an element $\beta \in \mathrm{H}^2(M, \mathbb{Z}) \simeq \mathbb{Z}.$ Now $\mathrm{H}^1(S^1, \mathbb{Z}) \simeq \mathbb{Z}$
and we choose the angular form  $d\phi$ as the generator; the circle $S^1_{\phi}$
is a parameter, in addition to $M$, for a family of Dirac operators whereas
the 1D Dirac operator is defined on the circle $S^1_{\theta}.$  The parameter
$\phi$ measures the holonomy around $S^1_{\theta}$ of the Dirac operator coupled
to a constant vector potential. 

The family of Dirac operators is then twisted with a connection on $S^1_{\theta}
\times S^1_{\phi} \times M$ with total curvature $F= \frac{i}{2\pi} d\theta\wedge d\phi+ \beta$ and the index form on $S^1_{\phi} \times M$ becomes
$$ \int_{S^1_{\theta}}  e^{F/2\pi i } = \frac{d\phi}{2\pi} + \frac{d\phi}{2\pi}\wedge
\frac{\beta}{2\pi i}.$$
In particular, $\beta \mapsto d\phi \wedge \beta$ gives an isomorphism between
$\mathrm{H}^2(M, \mathbb{Z})$ and $\mathrm{H}^3(T^3, \mathbb{Z}).$ 

The above construction can be slightly generalized noting that we have arbitrarily
chosen one circle $S^1_{\phi}$ in $T^3.$ We can also take direct sums of 1D Dirac
operators corresponding to three different choices of the parameter circle $S^1_{\phi}$ inside $T^3.$ Giving weights $(f_1, f_2, f_3) \in \mathbb{Z}^3$ to the different
choices leads to the index form $\sum_i f_i dx_i$ in $\mathrm{H}^1(T^3, \mathbb{Z})$ 
and to the Dixmier-Douady class $f\wedge \beta/2\pi i$ in $\mathrm{H}^3(T^3, \mathbb{Z})$ for 
a given line bundle on $T^3$ with curvature $\beta.$ 
More concretely, we may consider a family of Dirac operators on the unit circle
$S_{\theta}$ coupled to an abelian gauge connection with the structure group
$T^3.$ Then the moduli space of gauge connections is again $T^3$ and we may
twist the family of Dirac operators by a complex line bundle over $S^1_{\theta}
\times T^3$
with curvature $f= d\theta \wedge \alpha + \beta$ where $\alpha/2\pi i$ is an integral 
1-form on $T^3$ and $\beta/2\pi i$ is an integral 2-form on $T^3.$ Then
$$\int_{S^1_{\theta}} e^{f/2\pi i} = \alpha/2\pi i + \alpha \wedge \beta/(2\pi i)^2.$$
 In conclusion, we have 

\begin{theorem}\label{1D-3D} The odd K-theory classes on $T^3$ generated by
Dirac operators $D_A$ coupled to $U(1)$ gauge connection $A$ on the torus can be
alternatively defined by 1-dimensional Dirac operators in the fibers of a circle
bundle over a 3-torus provided that the greatest common divisor of the
components $\beta_{ij}$ is equal to one. 
\end{theorem}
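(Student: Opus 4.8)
The plan is to reduce the claimed equivalence of families to an equality of characteristic classes on the common parameter torus, and then to isolate the one piece of genuine topology, the surjectivity of cup product with $\beta$, which is exactly what the gcd hypothesis controls. Both the three-dimensional family $D_A$ and the proposed one-dimensional family are families of self-adjoint Fredholm operators over the parameter torus $T^3_a\cong T^3$, so each determines a class in $K^1(T^3)$. Since $K^1(T^3)\cong\mathbb{Z}^4$ is torsion free and the odd Chern character embeds it as a full-rank lattice in $\mathrm{H}^{\mathrm{odd}}(T^3,\mathbb{Q})=\mathrm{H}^1(T^3,\mathbb{Q})\oplus\mathrm{H}^3(T^3,\mathbb{Q})$, the pair consisting of the degree-one (spectral flow) class and the degree-three (Dixmier-Douady) class is a complete invariant: two families define the same element of $K^1(T^3)$ if and only if these two invariants agree. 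It therefore suffices to match them, and both have already been produced by the families index theorem above.

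First I would record the two index computations side by side. On the three-dimensional side the index form gives the spectral flow class $\sum_i f_i\,da_i\in\mathrm{H}^1(T^3_a,\mathbb{Z})$ together with a Dixmier-Douady class that is an integer multiple of $da_1\wedge da_2\wedge da_3$, as in \eqref{DDclass}. On the one-dimensional side, twisting by the line bundle of curvature $f=d\theta\wedge\alpha+\beta$ gives $\int_{S^1_\theta}e^{f/2\pi i}=\alpha/2\pi i+\alpha\wedge\beta/(2\pi i)^2$, namely the degree-one class $[\alpha]\in\mathrm{H}^1(T^3,\mathbb{Z})$ and the Dixmier-Douady class $[\alpha\wedge\beta]\in\mathrm{H}^3(T^3,\mathbb{Z})$. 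Matching the $\mathrm{H}^1$ parts is immediate: I would identify $\alpha$ with the spectral flow weights $f$, an isomorphism $\mathrm{H}^1(T^3,\mathbb{Z})\to\mathrm{H}^1(T^3,\mathbb{Z})$ carrying no obstruction. All the content thus sits in the degree-three term $\alpha\wedge\beta$.

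The crux, and the place where the hypothesis enters, is the surjectivity statement for the Dixmier-Douady class. With $\beta$ fixed, the classes realized as the weights $\alpha$ range over $\mathrm{H}^1(T^3,\mathbb{Z})$ form the image of the cup-product homomorphism $\,\cup\,\beta\colon\mathrm{H}^1(T^3,\mathbb{Z})\to\mathrm{H}^3(T^3,\mathbb{Z})\cong\mathbb{Z}$. In the standard bases, and via Poincar\'e duality on $T^3$, this map is $\alpha\mapsto\langle\alpha,\beta\rangle=\sum_i\alpha_i\beta_i$, whose image is the subgroup $\gcd(\beta_{ij})\,\mathbb{Z}$. Hence the one-dimensional construction sweeps out every Dixmier-Douady class, in particular the generator of $\mathrm{H}^3(T^3,\mathbb{Z})$ that arises from the three-dimensional family, precisely when $\gcd(\beta_{ij})=1$, that is, when $\beta$ is an indivisible integral $2$-form. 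This is exactly the stated condition. To keep this compatible with the simultaneous matching of the spectral flow class I would use the weighted direct sum of the three elementary one-dimensional families attached to the coordinate circles $S^1_\phi\subset T^3$, as in the construction immediately preceding the theorem: the weights $(f_1,f_2,f_3)$ then produce the spectral flow class $\sum_i f_i\,dx_i$ and, by the same computation, the Dixmier-Douady class $f\wedge\beta$.

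Putting these together, for corresponding data the three-dimensional and one-dimensional families carry the same spectral flow class and the same Dixmier-Douady class $\langle\alpha,\beta\rangle$, hence the same element of $K^1(T^3)$; and the hypothesis $\gcd(\beta_{ij})=1$ guarantees that, as the weights vary, these Dixmier-Douady classes exhaust all of $\mathrm{H}^3(T^3,\mathbb{Z})$, so that the one-dimensional families reproduce the full set of odd K-theory classes generated by the operators $D_A$. I expect the one real obstacle to be the third step: the index-form computations are routine transgression and Chern-character calculations, whereas the equivalence of ``the one-dimensional family sees every Dixmier-Douady class'' with ``$\beta$ is primitive'' is the single nontrivial topological input, resting on the unimodularity of the cup-product pairing $\mathrm{H}^1(T^3)\times\mathrm{H}^2(T^3)\to\mathrm{H}^3(T^3)$.
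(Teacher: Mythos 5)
Your overall strategy coincides with the paper's: use the Chern character on $T^3$ as a complete invariant of $K^1(T^3)$, read off the degree-one (spectral flow) and degree-three (Dixmier--Douady) parts of both families from the index forms, and reduce everything to the arithmetic of the cup-product pairing $\mathrm{H}^1(T^3,\mathbb{Z})\times\mathrm{H}^2(T^3,\mathbb{Z})\to\mathrm{H}^3(T^3,\mathbb{Z})$. However, your final assembly has a genuine quantifier gap. By \eqref{DDclass} the 3D family always has Dixmier--Douady part equal to $1$ times the generator of $\mathrm{H}^3(T^3,\mathbb{Z})$, with prescribed spectral flow part $f$. To match such a class by a 1D family you must first take the weights $\alpha=f$ (to match degree one), and then the degree-three part of the 1D family is forced to be $\langle f,\beta\rangle=\sum_i f_i\beta_i$; so what is needed is a $\beta$ with $\langle f,\beta\rangle=1$. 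Fixing one primitive $\beta$ and invoking surjectivity of $\alpha\mapsto\langle\alpha,\beta\rangle$ does not deliver this: the weights $\alpha$ realizing $\langle\alpha,\beta\rangle=1$ need not equal the given $f$. Concretely, take $\beta=dx_2\wedge dx_3$ (primitive, $\gcd\beta_{ij}=1$) and the 3D class with $f=(0,1,0)$: the only 1D family with the correct spectral flow has $\alpha=f$, hence Dixmier--Douady part $f\wedge\beta=0\neq 1$, so this 3D class is missed. ``The 1D construction sweeps out every Dixmier--Douady class as the weights vary'' is strictly weaker than ``both invariants can be matched simultaneously for each 3D class,'' and only the latter proves the theorem.

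The fix is to exchange the roles of the two vectors, which is exactly what the paper's own proof does (despite the wording of the theorem statement): given the 3D class with spectral flow $f$, a $\beta$ with $\sum_i\beta_i f_i=1$ exists, by B\'ezout, if and only if $\gcd(f_i)=1$; for such a $\beta$ the 1D family with weights $f$ and twist $\beta$ has Chern character $(f,1)$ and hence the same $K^1$ class. The operative hypothesis is therefore primitivity of the spectral flow vector $f$, with $\beta$ chosen afterwards and depending on $f$ (any such $\beta$ is then automatically primitive, which is the charitable reading of the condition on $\beta_{ij}$ in the statement). Your write-up correctly isolates the right piece of topology --- the image of $\cup\,\beta$ is $\gcd(\beta_{ij})\mathbb{Z}$ --- but attaches the gcd condition to the wrong variable, and as quantified your argument does not establish the claimed equivalence.
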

\begin{proof} The Chern character map from $K^*(T^3)$ to $\mathrm{H}^*(T^3, \mathbb{Z})$
is an isomorphism. The Chern character in the case of the 3D Dirac operators
described above is the generator of $\mathrm{H}^3(T^3, \mathbb{Z}^3)$ together with
the degree one component $f\in \mathrm{H}^1(T^3,\mathbb{Z}).$ In the case of the family of 1D
Dirac operators the Chern character is $f$ in degree one and $\sum \beta_i f_i$ 
times the basic form in
degree three. The latter has the value $1$ for a suitable $\beta$ if
the greatest common divisor of the components $f_i$ is one. \end{proof}
 
\section{Quantization of 1D fermions with an abelian gauge group}

As we have seen, up to homotopy, the moduli space of $U(1)^3$ gauge connections
on the unit circle $S^1$  is  a three torus $\mathbb{R}^3/\mathbb{Z}^3 = T^3_a$
where the group $\mathbb{Z}^3$ consists of the gauge transformations $(e^{i n_1 x},
e^{i n_2 x}, e^{i n_3 x})$ 
with $0\leq x \leq 2\pi$ acting on the constant potentials
$a$ as $a^i \mapsto a^i +n_i.$   Fix an element $\Omega \in \mathrm{H}^3(T^3_a, \mathbb{Z}).$ We construct a projective bundle of fermionic Fock spaces over
$T^3_a$ corresponding to the Dixmier-Douady class $\Omega.$ 

To start with consider the trivial bundle $\mathbb{R}^3 \times \mathcal{F}$
over $\mathbb{R}^3$ where $\mathcal{F}$ carries a representation of the
canonical commutation relations algebra (CAR) with a vacuum vector $|0>.$ 
The CAR algebra is generated by the elements $b^*(v)$ and $b(v)$ with 
$v\in H= L^2(S^1, \mathbb{C}^3)$ with the  nonzero anticommutators
$$b^*(u) b(v) + b(v) b^*(u) = <u,v>_{L^2}\cdot {\bf{ 1}}$$
when $u,v$ are proportional to the same basis vector $e_i \in \mathbb{C}^3.$ 
We require that
the fermions corresponding to the 3 different coordinates in $\mathbb{C}^3$
commute with each other; this is not essential, we could make them anticommute,
this is only to make certain sign conventions later on simpler.
The vacuum vector
is characterized by the property
$$ b^*(u) |0> = 0 = b(v) |0>$$
for $u\in H_-$ and $v\in H_+$ where $H_-$ (resp. $H_+$) is the subspace
spanned by the nonpositive (resp. positive) Fourier modes on the circle $S^1.$
 
Let $v_{j,p}(x) = \frac{1}{\sqrt{2\pi}} e^{ipx}e_j$ denote the Fourier modes
in the $j$th direction $e_j\in \mathbb{C}^3.$ For $n\in \mathbb{Z}^3$
set $b^*(n)= b^*(\sum_j v_{j, n_j})$ and likewise for $b(n).$ 
Next we twist these modes by a 1-cocycle over $\mathbb{R}^3$ allowing the
operators depend on $a\in\mathbb{R}^3$ such that
$$b^*(n,a+m) = e^{i a\wedge \beta\wedge m} b^*(n,a)$$
where $\beta\in\mathbb{Z}^3$ is a fixed vector.  This means that the fermion operators
are twisted by a complex line bundle over $T^3$ with curvature labelled by the
components of the vector $\beta.$ 

The action of the gauge transformations $g(m)$ in the Fock space is now completely
fixed by the condition
\begin{equation}
 g(m) b^*(n,a) g(m)^{-1} = b^*(n+m, a+m)= e^{i a\wedge \beta\wedge m} b^*(n+m,a)
\label{g-conj}
\end{equation}
and the action on the vacuum vector
\begin{equation} 
g(m) |0> = (\prod_i S_i^{\alpha_i m_i})|0>.\label{g-vacuum}
\end{equation}
Here $S_i$ is the shift operator in the $i$-direction, increasing the fermion number
by one unit;
$$S_i |0> =   b^*(v_{i,1}))|0>.$$
The vector $\alpha\in \mathbb{Z}^3$ corresponds to the spectral flow 1-form
in the previous section.
We have  $S_i N_j S^{-1}_i = N_j + \delta_{ij}$ where
$$N_j = \sum_p : b^*(v_{j,p}) b(v_{j,p}) : \,\, = \sum_{p>0} b^*(v_{j,p}) b(v_{j,p}) - \sum_{p\leq 0} b(v_{j,p}) b^*(v_{j,p}).$$ 
The shift operator $S_i$ is the quantization of the 
1-particle space shift operator $s_i v_{j,p} = v_{j,p+\delta_{j,i}}$ in the $i$th direction in $L^2(S^1,\mathbb{C}^3).$ 

Now we can compute the twisted action of $\mathbb{Z}^3$ in $\mathcal F.$ Denote by $|N>$ the subspace
with fermion numbers $N=(N_1, N_2, N_3). $ We can write
$$|N> = P_N(b^*) |0>$$
where $P_N$ is a polynomial in the creation operators $b^*$ of order $p=N_1 +N_2 +N_3.$
Using repeatedly \eqref{g-conj} we obtain
$$ g(n) |N> = e^{2\pi i p a\wedge \beta\wedge n} P_p(b^*_{+n}) g(n) |0>
= e^{2\pi i p a\wedge \beta \wedge n} P_p(b^*_{+n}) (\prod S_i^{\alpha_i n_i})|0>$$
where the subscript $+n$ indicates that all the momenta are shifted by the vector $n.$
Comparing now the  action of $g(n)g(m)$ to the action of $g(n+m)$ on $|N>$ we
get
$$g(n) g(m) = e^{2\pi i (\alpha\cdot m)) a\wedge n\beta}n g(n+m)= C(a; n,m)g(n+m).
\label{cocycle} $$
The group cohomology in degree $2$ for the $\mathbb{Z}^3$ action on the module of functions of $a\in \mathbb{R}^3$ is 1-dimensional, see the discussion in \cite{MW},
Section 7.1, specialized to the case $n=3.$ The cohomology is generated by
the cocycle $e^{2\pi i a\wedge n \wedge m}.$ One can then check by a direct computation,
projecting the cocycle to its antisymmetric form
Denote $k= \alpha \cdot \beta.$ Then the cocycle $C$ above is equal, up to a coboundary, to the cocycle 

$$C'(a; n,m)=  e^{2\pi i k a\wedge n \wedge m}$$
where $k= \alpha\cdot \beta.$ One can check the power $k$ by mapping the (antisymmetrized)
2-cocycle $C$ to the 3-cocycle  $\frac{1}{2\pi i} \delta \log C$ and integrating the
corresponding de Rham cocycle (using a David Wigners theorem as in \cite{MW}) over the 3-torus $T^3.$

Thus $\mathbb{Z}^3$ is acting through an abelian extension defined by the 2-cocycle $C$
with values in the abelian group of exponential $S^1$ valued functions on $\mathbb{R}^3.$ 
This extension is actually central since $C(a+p; n,m) = C(a; n,m)$ for all $p\in \mathbb{Z}^3.$ The integer $k$ corresponds to the Dixmier-Douady class $\Omega\in \mathrm{H}^3(T^3, \mathbb{Z}) = \mathbb{Z}.$ 

The quantized Dirac operator coupled to the constant vector potential $a$ can now
be written as
$$ \hat{D}_a = \sum_{j,p} j_p : b^*(v_{j,p}) b(v_{j,p}):  - \sum_{i=1}^{3} a_i J_0^{i} + \frac12 \sum_{i=1}^{3} a_i^2$$
where $J_0^{i}$ is the zero Fourier component of the gauge current in the $i$
direction,

$$J_0^{i} = \sum_p  : b^*(v_{i,p}) b(v_{i,p}): +\frac12  = N_i +\frac12$$  
where again the $v_{i,p}$'s are the Fourier modes in the $i$ direction in $\mathbb{C}^3.$
One can check by a direct computation that indeed the family of hamiltonians transforms covariantly under the gauge transformations,  $g(n)^{-1} \hat{D}_a g(n) =
\hat{D}_{a+n}.$

The first term in the operator $\hat D_a$ is the quantization of the free Dirac operator
$i\frac{d}{dx}$ in one dimension, i.e., the generator of rotations on a circle,
the second term is the 'minimal coupling' term, written as  $\int  : \psi(x)^*a(x) \psi(x): dx$ in the standard physics notation, now restricted to case $a=$ constant;
the last term is needed to guarantee the gauge covariance (corresponds to the last
term in \cite{Mic90}, eq. (4.5).

There is an even simpler realization of the gerbe as a projective vector bundle
(but not of the family of quantized Dirac operators) corresponding to the
same Dixmier-Douady class defined by the groupoid cocycle $c$ above.
Let $H$ be a complex Hilbert space with an orthonormal basis labelled by integers
$N.$ Fix a pair of vectors $p,q\in \mathbb{Z}^3$. Let $\mathcal{H} = L^2(\mathbb{R}^3, H)$ and for $\psi\in\mathcal{H}$ set
$$ (g(n) \psi)_N(a) = e^{2\pi iN a\wedge p \wedge n}  \psi(a-n)_{N-q\cdot n}$$
where $\psi_N(a)$ denotes the component of the vector $\psi(a)$ in the given basis.
The 2-cocycle computed from the action is
$$C'(a; n, m) = e^{-2\pi i(q\cdot m)a\wedge p\wedge n}.$$
It differs from $C$ by a coboundary for $k=p\cdot q$: The equivalence classes of $S^1$ 2-cocycles 
for the transformation groupoid $\mathbb{R}^3 \times \mathbb{Z}^3 \to \mathbb{R}^3$
correspond to equivalence classes of gerbes over $T^3$ which are classified by
$\mathrm{H}^3(T^3, \mathbb{Z}) = \mathbb{Z}.$ The antisymmetrization of $C'$ is
exactly $C$ when $k=p\cdot q.$ The Dixmier-Douady class of the gerbe is obtained from 
$\frac{1}{2\pi i} \delta \mathrm{log} C$ which generates the group cohomology $\mathrm{H}^3_{grp}(
\mathbb{Z}^3, \mathbb{Z}) = \mathbb{Z} = \mathrm{H}^3(T^3, \mathbb{Z}),$ \cite{MW}, Section 7.

-
\section{Quantization of 3D fermions with gauge group $U(1)$} 

In 3 space dimensions there is a technical problem related to the fact that
only constant gauge transformations can be canonically implemented in the
fermionic Fock space, \cite{ShSt}. However, one can circumvent this obstacle
as follows, \cite{Mick93}.  For any smooth vector potential $a$ one constructs
an unitary operator $T_a$ in the 1-particle space such that
$$ T(a,g) = T_{a^g}^{-1}g T_a$$
has the property that $[\epsilon, T(a,g)]$ is Hilbert-Schmidt for a smooth gauge
transformation $g.$ This method can be applied also in the nonabelian case.

Now $T(a,g)$ can be canonically quantized in the fermionic Fock space; the quantization
is uniquely defined up to a complex phase. In our case, restricting to the constant
vector potentials, the gauge transformation are simply shifts $a\mapsto a + n$
for $n\in \mathbb{Z}^3.$ Denote the quantized operators as $g(n).$ Now
$g(n)$ acts in the free fermionic Fock space.  But because of the nontrivial
Dixmier-Douady class in $\mathrm{H}^3(T^3,\mathbb{Z})$ computed by the
index theory argument before, the action is projective,
$$g(n) g(m) = C(a;n,m) g(n+m).$$

The structure of the gerbe over $T^3$ coming from the fermionic quantization
can be further analyzed in a very concrete manner.  Let $\lambda \in \Bbb R$
and define $U_{\lambda} = \{a\in \mathcal{A}| \lambda\notin Spec(D_a)\}.$
Then $U_0 = \{ a\in \mathbb R^3 \setminus \mathbb{Z}^3\}$ and $\mathbb{R}^3 = U_0 \cup U_{1/3}.$ The complex line bundle of fermionic vacua over $U_0$ has curvature
$$\omega =  \frac14 \text{tr}_C  (F -\epsilon) dF dF$$
where $F= D_a/|D_a|$ and $\epsilon = D_0/|D_0|$ (in the subspace of zero modes,
i.e. 3-momentum equal to zero, we fix the action of $\epsilon$  as multiplication
by zero)  and the subscript $C$ refers to the conditional trace $\text{tr}_C (X)=
\frac12 \text{tr}(X + \epsilon X\epsilon).$ In the following we write the discrete
momentum as a complex hermitean $2\times 2$ matrix $p= \sum_i p_i \sigma_i$ with
$\sigma_i$ the triple of Pauli matrices and similarly for the potential $a.$ This means that in the momentum basis
$$\omega = \frac18 \sum_{p\in \mathbb{Z}^3} \text{tr} \left(\frac{p +a}{|p+a|} - \frac{p}{|p|}\right)
d\frac{p+a}{|p+a|}\wedge d\frac{p+a}{|p+a|}$$
where again $p/|p|$ is set to zero when $p=0$ and the trace here is the $2\times 2$ matrix trace.

In a similar way the curvature of the vacuum line bundle over $U_{1/3}$ is evaluated
by replacing $D+a$ by $D+a -\frac13.$ 

In general, a gerbe is a projective Hilbert bundle over some space $X.$ It can be
described alternatively in terms of local complex line bundles $L_{\alpha\beta}$ over intersections $V_{\alpha}\cap V_{\beta}$ of elements 
$V_{\alpha}$ of an open cover of $X$ with prescribed isomorphims $L_{\alpha\beta}
\otimes L_{\beta\gamma} \equiv L_{\alpha\gamma}$ leading to the $S^1$ valued
cocycle $f_{\alpha\beta\gamma}\cdot  \bold{1} = L_{\alpha\beta}\otimes L_{\beta\gamma}
\otimes L_{\gamma\alpha}.$ Then if $\pi: Y\to X$ is a projection and $Y$ is
contractible we may write
$$\pi^*(L_{\alpha\beta}) = L_{\alpha} \otimes L_{\beta}^{-1}$$
for a family of local line bundles $L_{\alpha}\to U_{\alpha}= \pi^{-1}(V_{\alpha})$ over
$Y.$ This is exactly the case above, with $Y=\mathbb{R}^3$ and $X= T^3.$  
 
The curvature is formally a sum of two terms: The first is $\text{tr}\, F dF dF$
which is an extension of the curvature formula on a finite-dimensional Grassmannian 
to the infinite dimensional setting and the second is the exact term $\text{tr}\,
\epsilon dF dF.$ Both diverge separately when one computes the infinite sum
over momenta $p.$ However, for a fixed momentum $p$ the first term gives
$$\omega^{(1)}(p) = \frac{ \sqrt{-1}}{4}\sum_{ijk} \epsilon_{ijk}\frac{(a+p)_i\, da_j\wedge  da_k}{|p+a|^3}$$
for  $a\in\mathbb{R}^3.$ This is the curvature of a unit magnetic monopole located at
the point $-p,$  with period $2\pi\sqrt{-1}.$  So the total curvature is the sum of curvatures of magnetic
monopoles located in the infinite lattice $\mathbb{Z}^3\subset\mathbb{R}^3$
but renormalized by the subtraction of the infinite sum of exact forms located
at the same lattice points.   
 
Let us also briefly consider the case of Dirac hamiltonians in the even dimensional
case, $d=4,$ in the simple situation when the first Chern class of the $U(1)$
gauge field over the torus $T_x^4$ is zero. So again all the topological information
is in the holonomies around the four different circles in $T_x^4$ which form the
group $T_a^4.$   Now the Dirac spinors have 4 complex components and we have
the chirality operator $\Gamma$ with $\Gamma^2=1,$ anticommuting with the Dirac
operators.  

The families index theorem gives now even characteristic classes on the moduli
space $T^4_a$ of $U(1)$ gauge potentials. The cohomology in dimension 4 is
one dimensional and what corresponds to the gerbe form $\Omega$ before is
now the generator in $\mathrm{H}^4(T^4, \mathbb{Z}) = \mathbb{Z}.$ The local
trivialization on $\mathbb{R}^4$ are now local closed 3-forms. 

As before, the Dirac operator is invertible when $a\in \mathbb{R}^4 \setminus
\mathbb{Z}^4.$   On this set of potentials the 3-form
$$ \omega_3 = \text{tr}_C \Gamma(F-\epsilon) dF \wedge dF \wedge dF$$
is well-defined.  We can compute it at each momentum vector $p\in \mathbb{R}^4$
and the result is
$$ \omega^{(1)}_3(p) = \frac{(a+p) \wedge da \wedge da\wedge da}{|p+a|^4}$$
for the first term; the renormalization term involving the operator $\epsilon$
is the exact form
$$\omega_3^{(2)}(p) = d\,\text{tr} \Gamma \epsilon\frac{p+a}{|p+a|} d\frac{p+a}{|p+a|} \wedge  d\frac{p+a}{|p+a|}.$$
Summing over $p,$ both $\omega_3^{(1)}(p)$ and $\omega_3^{(2)}(p)$ diverge but
their difference is convergent. 
 
\section{Gerbes on compact simple Lie groups}

Gerbes over compact simple Lie groups have been constructed in several ways;
using the quantization of chiral fermions see \cite{CMM}, or more direct constructions
\cite{Mick2003}, \cite{Mein}, \cite{Mur}. 
As an application of the ideas in the previous sections we give another 
construction using finite dimensional groupoid extensions.

Let $G$ be a simple simply connected compact Lie group. The third cohomology
$\mathrm{H}^3(G, \mathbb{Z})$ is isomorphic to the group $\mathbb{Z}.$ Let
us fix a de Rham representative $\Omega$ of a class of level $k\in \mathbb{Z}.$ 
Let $\tilde \Omega$ be the pull-back of $\Omega$ with respect to the exponential
mapping $\exp: \mathfrak{g} \to G.$ We can then choose a 2-form $\theta$ on $\mathfrak{g}$ such that $d\theta = \tilde \Omega.$ The form $\theta$ gives in
a natural way a closed 2-form on a groupoid $EXP.$ 

The groupoid $EXP$ is defined as follows. The sources and targets of the groupoid are
point in $\mathfrak{g}.$ For a pair $s,t\in\mathfrak{g}$ there is a morphism
$s \to t$ if $e^s = e^t.$ This morphism can be realized as an element of $\Omega G$ as
$g(x) = e^{-xs} e^{xt}$ with $0\leq x \leq 1.$ This is a based gauge transformation taking
the constant vector potential $s$ on the unit circle to the constant potential $t.$

Since the morphisms are elements of the loop group we have a canonical closed 2-form
on the groupoid as the transgression of the form $\Omega$ on $G$ to the loop group.

 The set of arrows starting from a point $s$
is disconnected. For example, when $s=0$ the set $T_0$ of targets are the points $t$ such that $e^t=1.$ Restricted to the Cartan subalgebra $\mathfrak{h} \subset \mathfrak{g}$ these points are the points in the integral lattice $L$ generated by
$2\pi$ times the coroots $h_{\alpha} \in \mathfrak{h}$ and $\mathfrak{h}/L $ is the maximal torus
$H\subset G.$ The set $T_0$ consists then from the set of $G$ adjoint orbits through
$L.$ In the generic case the adjoint orbit through a point $t\in L$ is the 
smooth surface $G/H.$ 

By the homotopy exact sequence $G/H$ is simply connected, $\pi_2(G/H) = \pi_1(H) =
\mathbb{Z}^{\ell}$ with $\ell =$ rank of $G.$ It follows that also $\mathrm{H}_2(G/H,\mathbb{Z})
= \mathbb{Z}^{\ell}.$ The homology basis in $\mathrm{H}_2(G/H, \mathbb{Z})$ is given by
the 2-spheres $S^2_{\alpha}$ of $SU_{\alpha}(2)$ orbits through  $2\pi h_{\alpha}$ 
where $SU_{\alpha}(2)$ is the subgroup with Lie algebra $\mathfrak{g}_{\alpha}$ 
corresponding to the simple root vectors $e_{\pm\alpha}$ and the coroot vector $h_{\alpha}.$ 
The image of these spheres in $\mathfrak{g}$ is the unit in $G$ under the exponential mapping.

The  level $k$ on $G$ is of the form
$\Omega$ is 
$$\Omega = \frac{k}{24\pi^2} <dg g^{-1}, \frac12 [dgg^{-1}, dg g^{-1}]>$$
where the invariant form $<\cdot, \cdot >$ is normalized such that the length 
squared of the longest root becomes $2.$ The form $\theta$ can be written as
\cite{CM}
$$\theta_Z(X,Y) = \frac{k}{4\pi^2} <X, h(ad_Z) Y>$$
for tangent vectors $X,Y$ at a point $Z\in\mathfrak{g},$ with  
$$ h(z) = \frac{\sinh(z) -z}{z^2}.$$ 
On the other hand, this is obtained by integration along the paths $x\mapsto e^{xZ}$
from the 3-form $\Omega.$ When $e^Z=1$ the path is a loop and the resulting form
is just the transgression of $\Omega$ to a closed 2-form on the loop group 
restricted to loops of the form $g_Z(x) = e^{xZ}.$  In other words, up to a coboundary,
$\theta$ is the pull-back with respect to $Z\mapsto g_Z$ of the standard left invariant form

 $$\frac{k}{2\pi} \int_{S^1} <X, dY>$$
  on the loop group $\Omega G,$ 
with $X,Y$ in the loop algebra $\Omega \mathfrak{g}.$

To check the normalization of the 2-form $\theta$ we just need to pair it against
the homology cycles $S^2_{\alpha}.$ But for $Z=2\pi h_{\alpha}$ the form $\theta_Z$
gives $k/16\pi^3$ times the area form on $S^2_{\alpha}$ at the point $Z;$ 
the extra $4\pi$ in the denominator comes from $<X, h(ad_{2\pi h_{\alpha}}) Y>$
expanding $h(z)$ as a power series. On the
other hand, the radius squared of $S^2_{\alpha}$ is $||Z||^2 = ||2\pi h_{\alpha}||^2= 8\pi^2$ so the integral becomes $4\pi \cdot 8k\pi^2/ 16\pi^3= 2k.$
This is what one should expect since the ball in $\mathfrak{g}_{\alpha}$ with boundary 
$S^2_{\alpha}$ covers $SU_{\alpha}(2)$ twice in the exponential mapping.

The circle extension of $EXP$ given by the closed form $\theta$ fixes uniquely the
class of the gerbe defined by the 3-form $\Omega.$ The 3-homology of $G$ is generated
by any of the 3-spheres $SU_{\alpha}(2)$ and the integral of $\Omega$ over $SU_{\alpha}(2)$ is by Stokes theorem equal to $1/2$ times the integral of $\theta$ 
over $S^2_{\alpha}.$

\enddocument